\documentclass{llncs}

\usepackage[dvipsnames]{xcolor}
\RequirePackage{amsmath}
\RequirePackage{physics}

\RequirePackage[noend]{algpseudocode}

\RequirePackage{amssymb}
\RequirePackage{braket}

\RequirePackage{amsthm}
\newtheorem{funct}{Functionality}

\RequirePackage{thmtools}
\RequirePackage{multirow}
\RequirePackage{stmaryrd}

\usepackage{hyperref}
\RequirePackage{cleveref}

\RequirePackage{calc}
\usepackage{dblfloatfix}

\RequirePackage{nicefrac}
\RequirePackage[textsize=tiny,textwidth=15mm,disable]{todonotes}
\RequirePackage{complexity}
\RequirePackage{tikz}
\usetikzlibrary{shapes,backgrounds,calc,arrows,automata,fit}
\usetikzlibrary{intersections,fit,shapes.misc, decorations.markings,patterns}
\usetikzlibrary{graphs,quotes}
\usetikzlibrary{arrows.meta}
\usetikzlibrary{decorations.pathreplacing}
\RequirePackage{soul}
\RequirePackage{wrapfig}

\usepackage{adjustbox}
\usepackage{array}
\usepackage{booktabs}
\usepackage{multirow}
\usepackage{empheq}
\usepackage{xspace}
\usepackage{setspace}
\usepackage{multirow}
\usepackage{makecell}

\usepackage{thmtools}
\usepackage{thm-restate}
\usepackage[shortlabels]{enumitem}
\usepackage{amssymb}
\newlist{todolist}{itemize}{2}
\setlist[todolist]{label=$\square$}
\usepackage{pifont}

\setlist[description]{font=\normalfont}

\usepackage{etoolbox}

\newcommand{\algstrut}[1][\algruledefaultfactor]{\vrule width 0pt
depth .25\baselineskip height #1\baselineskip\relax}
\newcommand*{\algrule}[1][\algorithmicindent]{\hspace*{.5em}\vrule\algstrut
\hspace*{\dimexpr#1-.5em}}

\makeatletter
\newcount\ALG@printindent@tempcnta
\def\ALG@printindent{\ifnum \theALG@nested>0\ifx\ALG@text\ALG@x@notext \else
    \unskip
    \ALG@printindent@tempcnta=1
    \loop
    \algrule[\csname ALG@ind@\the\ALG@printindent@tempcnta\endcsname]\advance \ALG@printindent@tempcnta 1
    \ifnum \ALG@printindent@tempcnta<\numexpr\theALG@nested+1\relax \repeat
    \fi
    \fi
}

\patchcmd{\ALG@doentity}{\noindent\hskip\ALG@tlm}{\ALG@printindent}{}{\errmessage{failed to patch}}

\AtBeginEnvironment{algorithmic}{\lineskip0pt}



\RequirePackage[bottom,symbol]{footmisc} 
\RequirePackage{footnote}
\makesavenoteenv{tabular}
\makesavenoteenv{table}

\newlength\mytemplen
\newsavebox\mytempbox

\makeatletter
\newcommand\mybluebox{\@ifnextchar[{\@mybluebox}{\@mybluebox[0pt]}}

\def\@mybluebox[#1]{\@ifnextchar[{\@@mybluebox[#1]}{\@@mybluebox[#1][0pt]}}

\def\@@mybluebox[#1][#2]#3{
    \sbox\mytempbox{#3}\mytemplen\ht\mytempbox
    \advance\mytemplen #1\relax
    \ht\mytempbox\mytemplen
    \mytemplen\dp\mytempbox
    \advance\mytemplen #2\relax
    \dp\mytempbox\mytemplen
    \colorbox{myblue}{\hspace{1em}\usebox{\mytempbox}\hspace{1em}}}
\makeatother

\definecolor{shadecolor}{cmyk}{.05,.05,0.05,0.05}
\definecolor{light-blue}{cmyk}{0.15,.15,.15,.15}
\newsavebox{\mysaveboxM} \newsavebox{\mysaveboxT}

\makeatletter
\def\blfootnote{\gdef\@thefnmark{}\@footnotetext}
\makeatother

\algnewcommand\algorithmicswitch{\textbf{switch}}
\algnewcommand\algorithmiccase{\textbf{case}}
\algnewcommand\algorithmicassert{\texttt{assert}}
\algnewcommand\Assert[1]{\State \algorithmicassert(#1)}\algdef{SE}[SWITCH]{Switch}{EndSwitch}[1]{\algorithmicswitch\ #1\ \algorithmicdo}{\algorithmicend\ \algorithmicswitch}\algdef{SE}[CASE]{Case}{EndCase}[1]{\algorithmiccase\ #1}{\algorithmicend\ \algorithmiccase}\algtext*{EndSwitch}\algtext*{EndCase}

\usepackage{enumerate}

\newcolumntype{R}[2]{>{\adjustbox{angle=#1,lap=\width-(#2)}\bgroup}l<{\egroup}}

\makeatletter
\patchcmd{\ALG@step}{\addtocounter{ALG@line}{1}}{\refstepcounter{ALG@line}}{}{}
\newcommand{\ALG@lineautorefname}{Line}
\makeatother

\usepackage{svg}
\usepackage{amsmath}
\definecolor{codegreen}{rgb}{0,0.6,0}
\definecolor{codegray}{rgb}{0.5,0.5,0.5}
\definecolor{codepurple}{rgb}{0.58,0,0.82}
\definecolor{backcolour}{rgb}{0.95,0.95,0.92}
\usepackage{pythonhighlight}
\makeatletter
\lstdefinestyle{python-style}{
    language=python,
    backgroundcolor=\color{backcolour},   
    commentstyle=\color{codegreen},
    keywordstyle=\color{magenta},
    numberstyle=\tiny\color{codegray},
    stringstyle=\color{codepurple},
    basicstyle=\ttfamily\footnotesize,
    breakatwhitespace=false,         
    breaklines=true,                 
    captionpos=b,                    
    keepspaces=true,                 
    numbers=left,                    
    numbersep=5pt,                  
    showspaces=false,                
    showstringspaces=false,
    showtabs=false,                  
    tabsize=2
}
\makeatother

\usepackage{anyfontsize}
\usepackage[bottom]{footmisc}
\usepackage{quantikz}
\usepackage[makeroom]{cancel}
\usepackage{forest}
\usepackage{wrapfig}
\usepackage{tabularx}

\usepackage[table]{xcolor}
\usepackage{lineno} 
\newcommand{\myparagraphMajor}[1]{{\par\vspace{0.4\baselineskip}\noindent\textbf{#1}~}}

\newcommand\vect[1]{\ensuremath{\left( #1 \right )}\xspace}

\definecolor{light-blue}{cmyk}{0.8,.15,.15,.15}

\newcommand\defaccr[2]{\newcommand#1{#2\xspace}}
\newcommand\defmath[2]{\newcommand#1{\ensuremath{#2}\xspace}}
\newcommand\concept[1]{\textit{#1}}

\defmath\PG{\mathcal{\hat P}_n}
\defmath\hP{{\hat P}}
\defmath\hQ{{\hat Q}}

\newcommand{\toolFormat}[1]{\texttt{#1}}
\defaccr{\quokka}{\toolFormat{Quokka\#}}
\defaccr{\quizx}{\toolFormat{QuiZX}}
\defaccr{\sliqsim}{\toolFormat{SliQSim}}
\defaccr{\sliqec}{\toolFormat{SliQEC}}
\defaccr{\gpmc}{\toolFormat{GPMC}}
\defaccr{\ganak}{\toolFormat{Ganak}}
\defaccr{\dfourmax}{\toolFormat{d4Max}}
\defaccr{\quasimodo}{\toolFormat{Quasimodo}}
\defaccr{\ddsim}{\toolFormat{DDSim}}
\defaccr{\qcec}{\toolFormat{QCEC}}
\defaccr{\autoQ}{\toolFormat{AutoQ}}

\renewcommand\dots{\makebox[.7em][c]{.\hfil.\hfil.}}

\renewcommand\phi{\varphi}

\defmath{\img}{\mathtt{image}}
\defmath{\apre}{\mathtt{\forall preimage}}

\defmath{\defn}{\,\triangleq\,}

\let\set\undefined

\providecommand{\set}[1]{\ensuremath{\left\lbrace #1 \right\rbrace}}

\providecommand{\sizeof}[1]{\ensuremath{\left\vert{#1}\right\vert}}

\defmath{\bool}{\ensuremath{\mathbb{B}}}
\defmath{\complex}{\ensuremath{\mathbb{C}}}
\defmath{\integers}{\ensuremath{\mathbb{Z}}}
\defmath{\conditionalind}{\mathrel{\text{\scalebox{1.07}{$\perp\mkern-10mu\perp$}}}}
\defmath{\dx}{\partial x}
\defmath{\ddx}{\sfrac{\partial}{\partial x}}
\defmath{\half}{\textstyle{\frac{1}{2}}}

\newenvironment{smallmat}{\left[\begin{smallmatrix}}{\end{smallmatrix}\right]}

\defmath\oh{\mathcal O}

\defmath\yy{\begin{smallmat}
    0 & y^*\\
    y & 0\\
\end{smallmat}}

\defmath\ww{\begin{smallmat}
      0 & y   \\
      y^* & 0  \\
  \end{smallmat}
}

\newcommand\no[1]{\overline{#1}}

\newcommand{\jfid}{\text{Jamio\l{}kowski fidelity}}

\usepackage[title]{appendix}

\newcommand{\ERROR}{\textcolor{Red}{\ding{53}}} 
\newcommand{\timeout}{TO}

\newcommand{\fast}[1]{\cellcolor{green!20}{\textbf{#1}}}
\newcommand{\CRASH}{\textcolor{orange}{\ding{115}}} 
\title{Quokka\#: Quantum Computing with \#SAT}

\author{
Jingyi Mei\inst{1}\orcidID{0000-0002-4665-9818}\thanks{Corresponding author.},
Dekel Zak\inst{1,2}\orcidID{0009-0001-0315-1539},
Muhammad Osama\inst{1}\orcidID{0000-0002-5023-5348},\\
Tim Coopmans\inst{1,2}\orcidID{0000-0002-9780-0949},
Alfons Laarman\inst{1}\orcidID{0000-0002-2433-4174}}
\authorrunning{Jingyi Mei, Dekel Zak, Muhammad Osama, Tim Coopmans, \& Alfons Laarman}
\institute{
Leiden University, 
Leiden, The Netherlands
\and Delft University of Technology, Delft, The Netherlands
\email{\{j.mei,m.o.mahmoud,a.w.laarman\}@liacs.leidenuniv.nl; \email{\{d.z.zak,t.j.coopmans\}@tudelft.nl}
}
}

\renewcommand{\orcidID}[1]{\smash{\href{http://orcid.org/#1}{\protect\raisebox{-1.25pt}{\protect\includegraphics{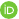}}}}}

\begin{document}

\maketitle

\begin{abstract}
We present \quokka, a versatile, open-source Python library for quantum circuit analysis.
\quokka reduces various simulation, verification, and synthesis tasks to weighted model counting (\#SAT). It supports universal quantum circuits and a wide variety of gates.
\quokka provides multiple encodings based on different algebraic bases and equi\-valence-checking methods, enabling key performance trade-offs.
Moreover, the new version of \quokka adds approximate equivalence checking,
which is crucial in its synthesis algorithms,
since it enables translation between arbitrary gate sets.
Its synthesis engine is depth-optimal, making it well-suited to real-world quantum computing.
This paper demonstrates the design, extensibility, and use of \quokka. 
\end{abstract}

\vspace{-1.5em}
\section{Introduction}
\label{sec:introduction}
\vspace{-.2em}

\begin{table}[b]
\renewcommand{\arraystretch}{1.5}
\setlength{\tabcolsep}{3pt}
\caption{
Functionalities (defined in \autoref{sec:func}) supported by \quokka and other state-of-the-art tools.
We denote non-universal and non-optimal circuit synthesis with
$\bigcirc$.
}
\vspace{-.2em}
\label{table:encoding-comparison}
\scalebox{0.58}{
\begin{tabularx}{1.7\textwidth}{l|c|c|c|c|c|c|c|c|c}  \hline\hline
\textbf{Functionality \textbackslash \ Tool}    &  {\quokka}    & \multicolumn{1}{|c|}{\toolFormat{MQT}}      & \sliqsim      & \sliqec       & \quasimodo    & \toolFormat{AutoQ} & \quizx & \toolFormat{Synthetiq} & \toolFormat{mtims} \\ 
\hline
\bf Weak / strong simulation 
    & $\cross$/$\checkmark$  
    & $\checkmark$/$\checkmark$ 
      \cite{zulehner2018advanced} 
    & $\checkmark$/$\checkmark$~\cite{tsai2021bit} 
    & $\cross$/$\cross$         
    & $\checkmark$/$\checkmark$~\cite{quasimodo} 
    & $\cross$/$\checkmark$~\cite{chen2023automata}   
    & $\cross$/$\checkmark$~\cite{kissinger2022simulating}    
    & $\cross$/$\cross$     
    & $\cross$/$\cross$     \\ \hline
\bf Eq. check. (exact / approx)
    & $\checkmark$ / $\checkmark$    
    & $\checkmark$/$\checkmark$
     \cite{advanced2021burgholzer,burgholzer2021random}    
    & $\cross$/$\cross$    
    & $\checkmark$/$\checkmark$\cite{wei2022accurate}   
    & $\checkmark$/$\cross$\cite{sistla2023weighted}
    & $\checkmark$/$\cross$\cite{chen2023automata}  
    & $\cross$/$\cross$     
    & $\cross$/$\cross$   
    & $\cross$/$\cross$  
    \\ \hline
\bf Partial eq. check. 
    & $\cross$    
    &  $\checkmark$
      \cite{advanced2021burgholzer}    
    & $\cross$     
    & $\checkmark$~\cite{chen2022partial}   
    & $\cross$     
    & $\cross$ 
    & $\cross$     
    & $\cross$ 
    & $\cross$   \\ \hline
\bf (Hoare logic) verification 
    & $\checkmark$       
    & $\cross$     
     
    & $\cross$     
    & $\cross$     
    & $\cross$ 
    & $\checkmark$~\cite{chen2023automata} 
    & $\cross$     
    & $\cross$ 
    & $\cross$  \\ \hline
\bf Synthesis (exact / approx.)     
    & $\checkmark$/$\checkmark$  
    & $\bigcirc$/$\cross$
     \cite{schneider2023sat}
    & $\cross$/$\cross$         
    & $\cross$/$\cross$             
    & $\cross$/$\cross$         
    & $\cross$/$\cross$     
    & $\cross$/$\cross$         
    & $\bigcirc$/$\bigcirc$~\cite{paradis2024synthetiq} 
    & $\checkmark$/$\cross$~\cite{amy2013mitm}
    \\ 
\hline\hline
\end{tabularx}}
\end{table}

Decades of research into satisfiability have yielded powerful solvers.
Initial attempts to harness these solvers for quantum circuit analysis achieved some success, but these results were also limited to non-universal circuits \cite{berent2022towards,schneider2023sat} or a few qubits~\cite{Bauer2023symQV,chen2023theory}.
By switching from SAT/SMT to weighted model counting (WMC or \#SAT), we recently enabled the use of state-of-the-art solvers for quantum circuit simulation \cite{mei2024simulating}, equivalence checking \cite{mei2024eq}, and synthesis~\cite{zak2025syntesis}.
Compared to previous work, the new \quokka library extends these advances by adding Hoare logic verification and approximate circuit equivalence (see \autoref{table:encoding-comparison}).
Moreover, it also adds a computational-basis encoding, which we empirically show to outperform the original Pauli-basis encoding~\cite{mei2024simulating,mei2024eq},
and introduces a new cyclic encoding enabling both exact and approximate equivalence checking.

\quokka currently supports the most commonly used (universal) quantum gate sets and allows users to add arbitrary gates. 
\quokka's performance is competitive with other state-of-the-art tools.
As \quokka lifts model counting from its traditional domain of classical probabilities~\cite{chavira2008probabilistic,shaw2024model,korhonen2023sharpsattd,chakraborty2015weighted} to the quantum domain, its impact has spilled back to the satisfiability community: the developers of d4Max~\cite{audemard2022maxsat} and Ganak~\cite{srsm19} have added support for complex numbers, and the \#SAT competition was extended with \quokka's benchmarks~\cite{Competition2021_23}.
The open-source implementation of \quokka has attracted interest from
international researchers and institutions as reflected by follow-up works~\cite{huang2026equivalence} and issue reports on its GitHub repository.
\todo[color=green!20]{Tim: needs proof. Point to (closed) github issues?}
These uses demonstrate the tool's future potential for verification researchers and symbolic reasoning experts to impact quantum computing research.

 \section{Background}
We provide the background on quantum computation and weighted model counting needed for this paper.
For an elaborate introduction, see \cite{quist2024advancing,nielsen2000quantum}.

\myparagraphMajor{Weighted model counting (WMC or \#SAT).}
The \emph{model count} of a  propositional formula $F(x_1, \dots, x_n)$ is the number of its satisfying assignments~\cite{valiant1979complexity,sang2004combiningcc}.
A weight function $W: L\rightarrow \mathbb{R}$ assigns weights to literals $L = \set{x_1, \dots, x_n, \overline{x_1}, \dots, \overline{x_n}}$.
An assignment's weight is the product of the weights of its literals, and the weighted model count
$\sizeof{F}_W$ is the sum of the weights of $F$'s satisfying assignments~\cite{sang2005performing,chavira2008probabilistic}.
As our computational-basis encoding requires complex numbers,
we extend the weight function to the complex domain, i.e.,\ $W:L\rightarrow\mathbb{C}$. 

\myparagraphMajor{Quantum Computing.}
An $n$-qubit \emph{quantum state} is a complex $2^n$-sized vector of unit norm, typically written in Dirac notation as $\ket{\dots}$.
We have $\ket0=\vect{1,0}^\top$, $\ket1=\vect{0,1}^\top$ and the \emph{computational-basis state} $\ket{b_1 b_2\dots b_n}$ represents the combined state of $n$ qubits, where qubit $j$ is in state $\ket{b_j}$ ($b_j \in \{0, 1\}$).
Any $n$-qubit quantum state $\ket{\psi}$ can be written as a linear combination of computational-basis states:  $\ket{\psi} = \sum_{b\in\{0,1\}^n}\alpha_b\ket{b}$.
A computational-basis measurement on this state returns the outcome $b \in \{0, 1\}^n$ with probability $|\alpha_b|^2$.
Another way to represent an $n$-qubit quantum state is by the $2^n \times 2^n$ \emph{density matrix} $\ket{\psi} \cdot \bra{\psi}$, where $\bra{\psi} = \left(\ket{\psi}\right)^{\dagger}$, 
which can be decomposed into \emph{Pauli basis}, a set of matrices that span the space of $2^n \times 2^n$ matrices.
A \emph{quantum gate} $U$ (a $2^n \times 2^n$ unitary matrix) updates a quantum state $\ket{\phi} \mapsto U\cdot \ket{\phi}$ where $\cdot$ denotes matrix-vector multiplication.
A \emph{quantum circuit} is a sequence of gates $C = (G_1,G_2,\cdots, G_m)$. \section{\quokka's Design}\label{sec:design}

\quokka is a Python library available on PyPI and at \cite{quokka_sharp_github}.
\autoref{fig:architecture} shows the design of the \quokka library, which comprises four engines for simulation, verification, equivalence checking, and synthesis, with their inputs and outputs.
\quokka understands quantum circuits in QASM format~\cite{cross2022openqasm}.
Its circuit encoder transforms the circuits to a Boolean formula with weights on literals as explained in \autoref{sec:encoding}.
The different engines instrument the encoding to implement the desired functionality (as detailed in \autoref{sec:func}).

\begin{figure}[t!]
\vspace{-3em}
\hspace{-.5cm}
\scalebox{.6}{

\begin{tikzpicture}[
  node distance=0.7cm and 1.8cm,
  every node/.style={font={\sffamily\large}},
  box/.style={rectangle, rounded corners=4pt, draw, thick, minimum height=0.95cm, align=center},
  encoder/.style={box, fill=pink!45, minimum width=2.5cm},
  post/.style={box, fill=cyan!10, minimum width=2.5cm},
  cnf/.style={box, fill=green!50!black!10, minimum width=2.5cm},
  solver/.style={box, fill=gray!8, minimum width=2.5cm},
  eqbox/.style={draw=orange!70!black, thick, dashed, rounded corners=4pt, inner sep=4pt},
  section/.style={font=\bfseries, align=center},
  inputlbl/.style={font=\small, color=blue!70!black},
  every chain/.style={start chain=going right},
encflow/.style={->, thick, draw=pink!70!black},
  postflow/.style={->, thick, draw=cyan!60!black},
  cnfflow/.style={->, thick, draw=green!50!black},
  solverflow/.style={->, thick, draw=gray!70!black},
  auxflow/.style={->, thick, dashed, draw=gray!70}
]

\node[encoder] (main_enc) at (0,0) {Circuit\\encoder};

\node[inputlbl, left=1cm of main_enc] (u_in) {\sffamily\large \textcolor{black}{Circuit $U$}};
\draw[encflow] (u_in) -- node[above]{\small QASM} (main_enc);

\node[post, right=1.5cm of main_enc, yshift=2.5cm] (post_sim) {Simulation\\engine};
\draw (main_enc.north east) edge[encflow, bend left=0, shorten <=-.1cm] (post_sim.south west);

\node[cnf, right =1cm of post_sim] (cnf_sim) {CNF\\generator};
\draw[postflow] (post_sim) -- (cnf_sim);

\node[solver, right=1cm of cnf_sim] (wmc_sim) {\#SAT\\solver};
\draw[cnfflow] (cnf_sim) -- node[above]{\small\sffamily wCNF} (wmc_sim);

\node[right=0.5cm of wmc_sim, font=\small\sffamily] (res_sim)
{probability $\bra{0^n}U^{\dagger}MU\ket{0^n}$};
\draw[solverflow] (wmc_sim) -- (res_sim);

\node[post, below=.8cm of post_sim, yshift=0cm] (post_ver) {Verification\\engine};
\draw (main_enc) edge[encflow, bend left=0] (post_ver);

\node[cnf, right=1cm of post_ver] (cnf_ver) {CNF\\generator};
\draw[postflow] (post_ver) -- (cnf_ver);

\node[solver, right=1cm of cnf_ver] (wmc_ver) {\#SAT\\solver};
\draw[cnfflow] (cnf_ver) -- node[above]{\small\sffamily wCNF} (wmc_ver);

\node[left=1.5cm of post_sim,font=\sffamily\large] (m) {Measurement specification $M$};
\draw[auxflow] (m.east) -- (post_sim);

\node[left=1.5cm of post_ver,font=\sffamily\large, yshift=.8cm] (c) {Pre- \& post-condition $P,Q$};
\draw[auxflow] (c.east) -- (post_ver);

\node[right=0.5cm of wmc_ver, font=\small\sffamily] (res_ver)
{true iff $\set{P}~U~\set{Q}$~\cite{li2019hoare}};
\draw[solverflow] (wmc_ver) -- (res_ver);

\node[post, below=.8cm of post_ver, yshift=0cm] (post_eq) {Equivalence\\engine};
\draw (main_enc) edge[encflow, bend left=0] (post_eq.north west);

\node[cnf, right=1cm of post_eq] (cnf_eq) {CNF\\generator};
\draw[postflow] (post_eq) -- (cnf_eq);

\node[solver, right=1cm of cnf_eq] (wmc_eq) {\#SAT\\solver};
\draw[cnfflow] (cnf_eq) -- node[above]{\small\sffamily wCNF} (wmc_eq);

\node[right=0.5cm of wmc_eq, font=\small\sffamily,text width=4cm] (res_eq)
{true iff $V$ is at least $1-\epsilon$ similar to $U$};
\draw[solverflow] (wmc_eq) -- (res_eq);

\node[encoder, left=1.5cm of post_eq] (v_enc) {Circuit\\encoder};
\draw[encflow] (v_enc) -- (post_eq);

\node[left=1cm of v_enc, blue] (v_in) {\sffamily\textcolor{black}{Circuit $V$}};
\draw[encflow] (v_in) -- node[above]{\small\sffamily QASM} (v_enc);

\node[post, below=.8cm of post_eq, yshift=0cm] (post_syn) {Synthesis\\engine};
\draw (main_enc.south east) edge[encflow, shorten <=-.1cm, shorten >=0cm, bend left=0] (post_syn.north west);

\node[cnf, right=1cm of post_syn] (cnf_syn) {CNF\\generator};
\draw[postflow] (post_syn) -- (cnf_syn);

\node[solver, right=1cm of cnf_syn] (wmax_syn) {Max\#SAT\\solver};
\draw[cnfflow] (cnf_syn) -- node[above]{\small\sffamily wCNF} (wmax_syn);

\node[right=0.5cm of wmax_syn, font=\small\sffamily,text width=4cm] (res_syn)
{depth-optimal circuit $V$\\in gate set $\mathcal G$ which is\\ at least $1-\epsilon$ similar to $U$};
\draw[solverflow] (wmax_syn) -- (res_syn);

\node[left=2.cm of post_syn, yshift=.8cm, font=\large\sffamily] (e)
{Precision parameter $\varepsilon \geq 0$};
\draw[auxflow] (e.east) -- (post_syn);
\draw[auxflow] (e.east) -- (post_eq);

\node[left=2.cm of post_syn, font=\large\sffamily] (g) {Gate set $\mathcal G$};
\draw[auxflow] (g.east) -- (post_syn);

\draw (wmax_syn.south west) edge[solverflow, bend left=10, font=\small\sffamily]
node[below] {\small iterate while $V \not \simeq_{1-\epsilon} U$} (post_syn.south east);

\draw[postflow, bend right=0]
  (post_syn) to node[midway, right, font=\small, text=cyan!60!black] {re-use} (post_eq);

\end{tikzpicture}

 }
\vspace{-1em}
    \caption{The architecture of \quokka.
}
    \label{fig:architecture}
\end{figure}

After instrumentation, the encoding is translated into weighted CNF (wCNF) in weighted DIMACS format~\cite{DCNF}.
The result is then passed to a (Maximum) \#SAT solver. 
We obtain the final result directly from the solver's output (for simulation and equivalence checking) or after a simple post-processing (for verification and synthesis).
For our Pauli-basis encoding, we initially extended the model counter GPMC~\cite{hashimoto2020gpmc} with negative weights~\cite{mei2024simulating}.
Since the computational-basis encoding needs complex numbers,
we extended GPMC with complex weights~\cite{GPMCc}.
We have actively collaborated with the model counting community to optimize model counters for quantum circuit analysis.
The Max\#SAT tool d4Max~\cite{lagniez2024d4} (for synthesis) and the state-of-the-art model counter Ganak~\cite{srsm19,SM2025} now both support complex weights.
In \autoref{sec:experiments}, we compare different counters.

\myparagraphMajor{Generality and Extensibility.}
\quokka supports the most common universal gate sets: Clifford+$T$~\cite{mei2024simulating}, Toffoli+$H$~\cite{mei2024eq}, and Clifford+$R$, where $R$ represents arbitrary rotation gates $R_X,R_Y$ and $R_Z$~\cite{mei2024simulating}.
Because we use \texttt{SymPy}~\cite{meuere2017sympy} to express the gate encoding in logic and then generate Python code with the corresponding CNF encoding, users can easily add arbitrary gates in either basis
by editing \texttt{comput2cnf\_py\_codegen.py} or \texttt{pauli2cnf\_py\_codegen.py}
(for encoding general unitaries, see~\cite{ende2025quantum}).
For example, as we shall see in \autoref{sec:encoding},
the encoding for the gate $X=\left[\begin{smallmatrix}
    0 & 1 \\
    1 & 0 
\end{smallmatrix}\right]$ is $F_X(q, q') = q' \Leftrightarrow \neg q$ in the computational basis. 
In \texttt{SymPy} syntax, this can be written as \texttt{Equivalent(q, \texttildelow q')}.

 \section{Encoding Quantum Computing in \#SAT}\label{sec:encoding}
Here, we exemplify how the \textsf{circuit encoder} (\autoref{fig:architecture})
represents circuits in the newly implemented computational-basis encoding.
See \cite{mei2024disentangling} for an in-depth theoretical trade-off analysis with Pauli-basis encoding~\cite{mei2024simulating,mei2024eq,quist2024advancing}.

\myparagraphMajor{Encoding Quantum States.\label{sec:encoding-states}}
In the computational-basis encoding, an $n$-qubit quantum state is represented
by a Boolean formula over $n$ variables $\vec q=\{q_1,\dots,q_n\}$, together
with a weight function $W$ and auxiliary variables when amplitudes differ from
$0$ or $1$.
For a state $\sum_{x\in\{0,1\}^n}\alpha_x\ket{x}$, each satisfying assignment
corresponds to a basis state $\ket{x}$, with its amplitude given by the
associated weight.
Basis states are encoded directly as conjunctions of literals.
For example, $\ket{01}$ is encoded as
$F_{\ket{01}}(q_1,q_2)=\neg q_1\wedge q_2$, whose unique satisfying assignment
has weight~$1$.
To encode superpositions, auxiliary variables are introduced to represent
non-unit amplitudes.
For instance, the state
$\ket{+}=\frac{1}{\sqrt{2}}(\ket{0}+\ket{1})$ is encoded by
$F_{\ket{+}}(q_1,h)=\no h$, together with a weight function $W_{\ket{+}}$ such
that $W_{\ket{+}}(\no h)=\frac{1}{\sqrt{2}}$.
Hence, a quantum state is represented by a pair $(F,W)$.
Based on the way our circuit encoding works, we can let $F_{\bra{\phi}} = F_{\ket{\phi}}$ by only conjugating the weight function of $F_{\ket{\phi}}$ for the involved literals.

\def\vbefore{\vec{v}_{\textnormal{before}}}
\def\vafter{\vec{v}_{\textnormal{after}}}
\def\qbefore{{q}_{\textnormal{before}}}
\def\qafter{{q}_{\textnormal{after}}}

\myparagraphMajor{Encoding Quantum Gates and Circuits. \label{sec:encoding-gates}}
The encoding $F_U$ of a gate $U$ satisfies $F_{U\ket{\phi}} (\vec{q}') = F_{\ket{\phi}}(\vec{q}) \wedge F_U (\vec{q}, \vec{q}')$ for all states $\ket{\phi}$, e.g. the Hadamard gate
$H = \frac{1}{\sqrt{2}}\begin{smallmat}
    1 & 1 \\
    1 & -1
\end{smallmat}$ 
is encoded as $F_H(q, q', h) = h \leftrightarrow (q \wedge q')$, with $W(\no h) = \frac{1}{\sqrt{2}}$ and $W(h) = -\frac{1}{\sqrt{2}}$.
A circuit is encoded by conjoining the encodings of its gates, much like in bounded model checking~\cite{biere2009bounded}, i.e.: $F_{G_1}(\vec q_0, \vec q_1) \land F_{G_2}(\vec q_1, \vec q_2) \land \dots \land F_{G_k}(\vec q_{m-1}, \vec q_m)$, where variables $\vec q_i$ represent the state after gate $G_i$.

 \section{Functionalities of \quokka}
\label{sec:func}

This section formalizes the functionality of the four \quokka modules shown in \autoref{fig:architecture} and describes their implementation.

\subsection{Simulation}
Classical simulation of quantum computing~\cite{viamontes2003improving,Anders2006fast,zulehner2018one,burgholzer2020improved} is a basic task in quantum compilation.
Due to the task's inherent complexity, good solutions are highly non-trivial
and have fundamental implications~\cite{bravyi2016trading,bravyi2019simulation}.
\quokka supports strong simulation, but not weak simulation (sampling a measurement's outcome distribution).
Here, we define a minimal notion of strong simulation, which suffices for supporting measurement specifications in the computational and Pauli bases.
\begin{funct}[Strong simulation]
    Given an $n$-qubit quantum circuit $C$
    and a measurement specification $P \in \{I, \ketbra{0}{0}, \ketbra{1}{1}\}^{\otimes n}$ (in the computational basis) or $P \in \{I, X, Y, Z\}^{\otimes n}$ (in the Pauli basis), compute the probability $\bra{\vec 0}C^\dagger P C\ket{\vec 0}$.
\end{funct}

Consider, for example, a quantum circuit consisting of a single Hadamard gate and a projector $P=\ketbra{0}$. \quokka encodes this query as:  
\[
F_{\ket{0}}(\vec q) \wedge F_H(\vec q, \vec q') \wedge F_{P}(\vec q, \vec q') \wedge  F_H(\vec q, \vec q') \wedge F_{\bra{0}}(\vec q'),
\]
where $F_{\ket{0}}$, $F_{P}(\vec q, \vec q')$, and $F_H$
are defined in \autoref{sec:encoding}.
The weighted model count of this encoding is $\frac{1}{{2}}$ because $|\langle 0 | H | 0\rangle|^2 = \frac{1}{{2}}$

To compute the probability of outcome 11 when measuring the first two qubits in the computational basis, we can use a projector $P =\ketbra{11}\otimes I^{\otimes n-2}$. \quokka then encodes this query as:
\begin{equation}\label{eq:sandwich}
 F_{\ket{0}}(\vec q_0) \wedge 
F_{C^\dagger}(\vec q_0, \vec q_1) \wedge 
F_{P}(\vec q_1, \vec q_2) \land
F_{C}(\vec q_2, \vec q_3) \wedge 
F_{\bra{0}}(\vec q_3).  
\end{equation}
\todo[color=green!20]{Tim: MAJOR: which projectors are allowed? How does the user construct them? Can it be any projector? (Note that the definition of a project is very general: hermitian matrix $O$ that squares to itself.}

 \subsection{Verification \label{sec:verification}}

We consider quantum Hoare logic as introduced in~\cite{li2019hoare},
restricted to unitary quantum circuits without measurements or control-flow constructs.
In this setting, quantum Hoare triples are of the form $\{P\}~C~\{Q\}$. The goal is to check whether executing circuit $C$ from a state in the pre-condition subspace, defined by projector $P$, yields a state in    the post-condition subspace, defined by $Q$.
\begin{funct}[Verification~\cite{li2019hoare}]\label{func:verify}
  Given a circuit $C$, projectors $P$ and $Q$, 
  the Hoare triple $\{P\}~C~\{Q\}$ is true if and only if 
  for any input state $\ket{\psi}$, we have
$
\ket{\psi} \models P \Longrightarrow C\ket{\psi} \models Q,
\text{ where: }
\ket{\psi} \models P \text{ iff } P\ket{\psi} = \ket{\psi}.
$
\end{funct}
\newcommand{\im}[1]{\operatorname{im}(#1)}

\begin{proposition}
Let $C$ be a unitary circuit and let $P,Q$ be projectors. Then the Hoare triple
$\{P\}~C~\{Q\}$ holds if and only if $\Tr(CPC^\dagger Q)=\Tr(P).$
\end{proposition}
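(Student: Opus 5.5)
First reformulate the Hoare triple as an operator inclusion. By \Cref{func:verify}, $\{P\}~C~\{Q\}$ holds iff every $\ket{\psi}$ with $P\ket{\psi}=\ket{\psi}$ satisfies $QC\ket{\psi}=C\ket{\psi}$. For a projector, $P\ket{\psi}=\ket{\psi}$ is equivalent to $\ket{\psi}\in\im{P}$. So the triple holds iff $C\,\im{P}\subseteq\im{Q}$. Since $C$ is unitary, $C\,\im{P}=\im{CPC^\dagger}$, and $P':=CPC^\dagger$ is again a projector with $\Tr(P')=\Tr(P)$. The claim therefore reduces to a statement about two projectors:
\[\im{P'}\subseteq\im{Q}\iff \Tr(P'Q)=\Tr(P').\]
Finally, by cyclicity of the trace, $\Tr(CPC^\dagger Q)=\Tr(P'Q)$ and $\Tr(P')=\Tr(P)$.

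To prove the reduced statement, fix an orthonormal basis $\{\ket{e_i}\}_{i=1}^{k}$ of $\im{P'}$, so that $P'=\sum_i \ketbra{e_i}{e_i}$ and $\Tr(P')=k$. Then
\[\Tr(P'Q)=\sum_i \bra{e_i}Q\ket{e_i}=\sum_i \norm{Q\ket{e_i}}^2 .\]
Each summand lies in $[0,1]$, because $Q$ is an orthogonal projector and so $\norm{Q\ket{e_i}}\le\norm{\ket{e_i}}=1$.

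\emph{($\Rightarrow$).} If $\im{P'}\subseteq\im{Q}$, then $Q\ket{e_i}=\ket{e_i}$ for every $i$. Each summand equals $1$, so the sum is $k=\Tr(P')$.

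\emph{($\Leftarrow$).} If the sum equals $k$, then every summand attains its maximum, i.e.\ $\norm{Q\ket{e_i}}=1$. By Pythagoras, $\norm{\ket{e_i}}^2=\norm{Q\ket{e_i}}^2+\norm{(I-Q)\ket{e_i}}^2$, which forces $(I-Q)\ket{e_i}=0$. Hence $\ket{e_i}\in\im{Q}$ for every $i$, and by linearity $\im{P'}\subseteq\im{Q}$.

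The only step that needs care is ($\Leftarrow$): it uses that the bound $\norm{Q\ket{e_i}}\le 1$ is tight exactly on $\im{Q}$. This relies on $Q$ being Hermitian (an orthogonal projector), so I will state explicitly that ``projector'' means $Q=Q^\dagger=Q^2$, as in~\cite{li2019hoare}; for non-orthogonal idempotents the equivalence can fail. The remaining ingredients are routine: $C\,\im{P}=\im{CPC^\dagger}$ holds because $C^\dagger C=I$, and the trace invariance is cyclicity.
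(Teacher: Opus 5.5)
Your proof is correct and is essentially the paper's argument written in coordinates. The paper sets $A=CPC^\dagger$ and deduces $(I-Q)A=0$ from the vanishing trace of the positive semidefinite operator $((I-Q)A)^\dagger((I-Q)A)$; evaluated in an orthonormal basis extending your $\{\ket{e_i}\}$, that trace is exactly $\sum_i\norm{(I-Q)\ket{e_i}}^2=k-\sum_i\norm{Q\ket{e_i}}^2$. The paper's forward direction is done algebraically from $QCP=CP$ rather than termwise.

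Your explicit reduction from the state-wise definition to $C\,\im{P}\subseteq\im{Q}$ and your insistence on Hermitian projectors are sound. They make explicit what the paper uses silently: it takes $QCP=CP$ ``by definition'', and its identity $A(I-Q)A=((I-Q)A)^\dagger((I-Q)A)$ needs $A=A^\dagger$ and $Q=Q^\dagger$.
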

\begin{proof}
By definition, $  \{P\}~C~\{Q\}  $ means $  QCP=CP  $.

$  (\Rightarrow)  $: $  QCP=CP  $ $  \implies  $ $  C^\dagger QCP=P  $ $  \implies  $ $  \Tr(C^\dagger QCP)=\Tr(P)  $. By cyclicity,
$\Tr(CPC^\dagger Q)=\Tr(P).$

$  (\Leftarrow)  $: Let $  A=CPC^\dagger  $. Because $C$ is a unitary and $P$ a projector, $A$ is also a projector.
Assume that $  \Tr(A Q)=\Tr(A)$.
Then $\Tr(A(I-Q))=0.$
By cyclicity and idempotence of projectors ($  A^2=A  $), we have that 
$\Tr(A(I-Q))=\Tr(A(I-Q)A).$
The fact that $  A(I-Q)A = ((I-Q)A)^\dagger ((I-Q)A) \succeq 0  $ and has trace zero, implies that $  (I-Q)A=0  $ (a property of positive-semidefinite operators). Thus $  A=QA  $, i.e.,
$$CPC^\dagger=QCPC^\dagger.$$ 
Right-multiplying by $  C  $ yields $  CP=QCP  $.
\end{proof}
In \quokka, the trace condition translates directly to a WMC equivalence.
For example, in Pauli-basis encoding, it is given as:
\begin{equation}
\label{eq:verif}
|F_{{\ket{\psi}\models P}}(\vec q)\wedge F_C(\vec q, \vec q') \wedge
F_{{\ket{\psi}\models Q}}(\vec q')|_W = |F_{{\ket{\psi}\models P}}(\vec q)|_W
\end{equation}
Here $F_{{\ket{\psi}\models P}}$ encodes those states that satisfy $P$, i.e.  $F_{\ket{\psi}\models \ketbra{0}{0}}(\vec q) = \no q_0$.
Note that the projector onto the $+1$ eigenspace of a Pauli string $\Pi_P$ is given by $P = \nicefrac12 (\Pi_P + I^{\otimes n})$.
Such projectors can be naturally encoded in the Pauli basis.
A variation of \autoref{eq:verif} for the computational-basis encoding is given as:
\[
\begin{aligned}
&|F_{C^\dagger}(\vec q_0, \vec q_1) \wedge F_P(\vec q_2,\vec q_3)
\wedge F_C(\vec q_4, \vec q_5)
\wedge F_Q(\vec q_5, \vec q_0)|_W 
&= |F_P(\vec q_0,\vec q_0)|_W.
\end{aligned}
\]

 \subsection{Equivalence checking}\label{sec:eqcheck}
A cornerstone of many quantum circuit optimization tasks~\cite{Bravyi2021cliffordcircuit,staudacher2023optimization} is \emph{equivalence checking}~\cite{hong2022equivalence,hong2021approximate,berent2022towards,advanced2021burgholzer,thanos2023fast}, as defined in~\autoref{func:equiv}.
Here, we use the \jfid{} between two unitaries \(U\) and \(V\), which is defined as
\(\mathit{Fid}_J(U,V)=2^{-2n}\lvert\Tr(UV^\dagger)\rvert^2\)~\cite{hong2021approximate}.

\begin{funct}[(Approximate) equivalence checking~\cite{hong2021approximate}]~\label{func:equiv}
    Given two $n$-qubit quantum circuits $C$, $C'$ and a real number $\epsilon\in[0,1]$,
    determine whether the \jfid{} between $C$ and $C'$ is greater than or equal to $1-\epsilon$.\\
When $\epsilon = 0$, we have exact equivalence checking.
\end{funct}

Fidelity checking between two \(n\)-qubit circuits can be reduced to computing the fidelity between \(UV^\dagger\) and the identity~\cite{tanaka2010exact,hong2021approximate}.
For example, the fidelity between \(U=T\cdot T\) and \(V=S\) is proportional to the weighted model count of the following Boolean formula:
\[
\underbrace{F_T(q_0,q_1) \wedge F_T(q_1,q_2)}_{\text{encoding of }U}\wedge\underbrace {F_{S^\dagger}(q_2,q_3)}_{\text{encoding of }V^\dagger}  \wedge 
 \underbrace{F_I(q_0,q_3)}_{\text{overlap with }I} \text{ where }
 F_I(q,q') = q \Leftrightarrow q'.
\]
We call this construction
the \emph{cyclic encoding}, 
as it essentially encodes $F_{UV^\dagger}(\vec q, \vec q)$.
It is formally justified by \cite[Th.~5]{zak2025syntesis} and supports both computational and Pauli basis: two distinct encodings differing in state and gate representation~\cite{mei2024disentangling}.
Since it encodes the fidelity between $U$ and $V$, it supports both exact and approximate equivalence checking. 
In contrast, the linear encoding~\cite{mei2024eq,thanos2023fast}, based on the Choi isomorphism~\cite{choi1975isomorphism}, only supports exact equivalence checking in the Pauli basis.
Overall, exact equivalence checking admits three encodings (linear in the Pauli basis and cyclic on both bases)~\cite{mei2024disentangling}, while fidelity computation is supported only by the cyclic encoding (in both bases).

 \subsection{Synthesis}\label{sec:syn}
Quantum processors can be based on different physical phenomena (trapped ions, photons, neutral atoms, superconducting qubits, etc), each with different properties in terms of qubit connectivity, gate sets, and operation costs~\cite{nielsen2000quantum}.
Synthesis~\cite{brand2023quantum,schneider2023sat,amy2023symbolic}  maps quantum algorithms to hardware with these various characteristics.
We define a circuit's \emph{depth} as the minimum number of time steps (or layers) needed to run the circuit, i.e., executing gates in parallel as much as possible.
Based on this, we define synthesis as follows.
\begin{funct}~\label{prob:c2c}
Given a quantum circuit $C_1$ in a finite gate set $\mathcal{G}_1$, a gate set $\mathcal{G}_2$, and an accuracy parameter $\epsilon \geq 0$,
find a {depth}-optimal quantum circuit $C_2$ in $\mathcal{G}_2$,
such that $C_1$ is approximately equivalent to $C_2$ up to $\epsilon$.
\end{funct}
Setting $\epsilon=0$ yields exact synthesis, 
but this is not always meaningful when $\mathcal{G}_1$ and $ \mathcal{G}_2$ yield different discretizations of the state space~\cite{giles2013exact}.
However, different \concept{universal} gate sets can always approximate each other up to arbitrary precision, as shown by the Solovay-Kitaev theorem~\cite {dawson2005solovaykitaev}.
In \quokka, the user can specify the gate set $\mathcal{G}_2$ based on the wide array of supported gates.
\quokka solves the synthesis problem via reduction to  weighted Max\#SAT~\cite{audemard2022maxsat},
which extends weighted model counting by allowing the definition of a set of optimization variables. It then returns an assignment to those variables that maximizes the formula's weight.  We first encode a one-layer parametric circuit $U_1(\vec q,\vec q',\vec y)$,
where assignments to $\vec y$ encode different gate choices in the layer.
For example, to synthesize a one-qubit circuit approximating $V=R_Z(\pi/8)$
using the gate set $\{H,T\}$, we encode
$
U_1(q,q',y) = (\neg y \rightarrow F_H(q,q')) \wedge (y \rightarrow F_T(q,q')) .
$

\quokka combines $U_1$ with $V$ into an approximate equivalence checking instance. It then invokes a weighted Max\#SAT call, with $\vec y$ as the optimization variables, to maximize the fidelity between $U_1$ and $V$.
If the maximum fidelity meets the threshold, the synthesized circuit is extracted from the
satisfying assignment to $\vec y$. Otherwise, it creates a new circuit $U_2$, that adds another layer to the construction of $U_1$.
This process iterates until the desired fidelity is achieved.
Correctness of this procedure is formally proved in~\cite[Proposition~11]{zak2025syntesis}.

Beyond predefined gate sets, \quokka also supports user-defined gate sets at the cost of higher runtime due to its general formulation. A user specifies a target gate set $\mathcal{G}_2$ and a fidelity threshold $1 - \epsilon$; \quokka then encodes each candidate gate via selector variables, so that the Max\#SAT solver jointly optimizes over gate choice and layer count. Supported built-in gates include $\{H, CX, S, T, CZ, C\sqrt{X}\}$, covering practically relevant gates.

  \section{Experimental Evaluation}
\label{sec:experiments}
We empirically evaluated \quokka on a machine equipped with an AMD Ryzen~9 7900X 12-Core processor, 24 hardware threads, and 62\,GB RAM. 
A time limit of 1800 seconds
is used for each benchmark instance.
We employ \gpmc~\cite{hashimoto2020gpmc,GPMCc} and
\ganak~\cite{srsm19} (and \dfourmax~\cite{lagniez2024d4}) 
as (Max)\#SAT solvers.
In addition to the main evaluation, we report case studies on verification
and synthesis.
All our encodings were accepted as benchmarks in the model counting competition~\cite{Competition2021_23}.

\myparagraphMajor{Simulation.}
We use the Munich Quantum Toolkit (MQT) benchmark set~\cite{mqt2023quetschlich}, focusing on
core algorithms and omitting minor variations (e.g., retaining QAOA but
excluding portfolio optimization with QAOA).
We additionally include benchmarks from RevLib, Feynman, and RevLib-H \cite{chen2025accelerating}; a translated version of RevLib that enables comparison with the gate set supported by \sliqsim.
All tools are tasked with computing the probability of measuring the
all-zero state.

We compare \quokka against decision diagram-based simulators
\ddsim~\cite{zulehner2019advanced},
\sliqsim~\cite{tsai2021bit}, and \quasimodo~\cite{quasimodo}.
We use the probability reported by DDSim as an independent reference value.
This comparison serves as a consistency check between independent implementations.
All reported answers are within $10^{-6}$ of this reference, except for the
Quasimodo result on the QNN instance with $n=18$ (marked as \ERROR), whose
reported probability differs from the reference by more than $10^{-6}$.
As shown in \autoref{tab:simulation_all}, the new computational-basis encoding in \quokka consistently outperforms the Pauli-basis encoding, often by orders of magnitude for both model counters.
Moreover, \quokka is competitive with \quasimodo, \sliqsim and \ddsim, particularly on circuits with many rotation gates, such as VQE, QFT, and W-state circuits.
By contrast, \ddsim performs better on highly structured circuits, such as the Grover, Qwalk, and several Feynman benchmarks.
A detailed discussion of the differences between above-mentioned tools and simulation tasks is given in \autoref{sec:discussion}.

\newcommand{\gpmcmark}{\textsuperscript{\textcolor{red}{+}}}
\newcommand{\ganakmark}{\textsuperscript{\textcolor{blue}{*}}}

\begin{table*}[!t]
\renewcommand{\arraystretch}{1.25}
\setlength{\tabcolsep}{5.5pt}
\caption{Simulation runtimes (sec) on MQTBench, Feynman, RevLib, and RevLib-H~\cite{chen2025accelerating} benchmarks, where $n$ ($|G|$) denotes the number of qubits (gates). MQTBench instances are grouped into rotation-heavy and structured families. {\timeout}, {\ERROR}, N/S, and {\CRASH} denote \emph{timeout}, \emph{wrong answer}, an unsupported benchmark, and a crash of \gpmc without producing an answer, respectively.}
\label{tab:simulation_all}
\centering
\scalebox{0.6}{
\begin{tabular}{p{3cm} | l |r r| r r  r r| r r r}
\hline\hline
\bf Suite
& \bf Circuit
& \multirow{2}{*}{$\mathbf{n}$}
& \multirow{2}{*}{$\mathbf{|G|}$}
& \multicolumn{2}{c}{\makecell[c]{\bf \quokka\\ (\toolFormat{Pauli})}}
& \multicolumn{2}{c|}{\makecell[c]{\bf \quokka \\(\toolFormat{Comp})}}
& \bf \multirow{2}{*}{\makecell[c]{\bf \toolFormat{Quasimodo}}}
& \bf \multirow{2}{*}{\makecell[c]{\bf \toolFormat{SliqSim}}}
& \bf \multirow{2}{*}{DDSim} \\
\cline{5-8}
& & & &
\textbf{\gpmc} & \textbf{\ganak} &
\textbf{\gpmc} & \textbf{\ganak} &
& & \\
\hline

\multirow{18}{=}{\makecell[l]{MQTBench\\(rotation-heavy)}}
& QAOA
& 14 & 266 & 0.23 & 2.27 & \fast{0.01} & 0.27 & 3.94 & N/S & 0.23 \\
& QAOA
& 15 & 285 & 0.84 & 3.69 & \fast{0.01} & 0.30 & 6.40 & N/S & 0.53 \\
& QAOA
& 16 & 304 & 0.60 & 2.61 & \fast{0.01} & 0.26 & 18.85 & N/S & 0.41 \\
\cline{2-11}

& QFT
& 16 & 672 & 0.02 & 0.06 & \fast{0.01} & 0.01 & 6.62 & N/S & 0.21 \\
& QFT
& 32 & 2,624 & 0.12 & 0.40 & \fast{0.03} & 0.05 & \timeout & N/S & 0.23 \\
& QFT
& 64 & 10,368 & 0.70 & 1.66 & \fast{0.13} & 0.14 & \timeout & N/S & 0.33 \\
\cline{2-11}

& QNN
& 18 & 1,259 & \timeout & \timeout & 182.53 & 79.03 & \ERROR & N/S & \fast{23.37} \\
& QNN
& 19 & 1,386 & \timeout & \timeout & \timeout & 275.40 & \timeout & N/S & \fast{133.96} \\
& QNN
& 20 & 1,519 & \timeout & \timeout & \timeout & \fast{598.93} & \timeout & N/S & 695.26 \\
\cline{2-11}

& VQE
& 14 & 236 & 0.11 & 1.62 & \fast{0.03} & 0.29 & 6.91 & N/S & 6.92 \\
& VQE
& 15 & 252 & 0.09 & 1.67 & \fast{0.03} & 0.33 & 20.32 & N/S & 22.66 \\
& VQE
& 16 & 270 & 0.09 & 1.90 & \fast{0.03} & 0.37 & 34.29 & N/S & 180.98 \\
\cline{2-11}

& W-state
& 32 & 559 & 0.11 & 3.24 & \fast{0.01} & 0.01 & \timeout & N/S & 0.22 \\
& W-state
& 64 & 1,135 & 0.32 & 6.65 & \fast{0.01} & 0.02 & \timeout & N/S & 0.23 \\
& W-state
& 128 & 2,287 & 1.39 & 16.58 & \fast{0.07} & 0.12 & \timeout & N/S & 0.44 \\
\cline{2-11}

& Su2random
& 16 & 744 & \timeout & \timeout & 89.07 & 62.81 & 76.68 & N/S & \fast{58.06} \\
& Su2random
& 17 & 816 & \timeout & \timeout & 138.87 & \fast{87.14} & 176.03 & N/S & 91.55 \\
& Su2random
& 18 & 891 & \timeout & \timeout & \timeout & \fast{248.00} & 462.18 & N/S & 461.20 \\
\hline

\multirow{15}{=}{\makecell[l]{MQTBench\\(structured)}}
& \makecell[l]{Qwalk (noancilla)}
& 8 & 9,369 & \timeout & \timeout & 4.51 & 34.92 & 2.64 & N/S & \fast{0.56} \\
& \makecell[l]{Qwalk (noancilla)}
& 9 & 18,585 & \timeout & \timeout & 17.91 & 745.47 & 9.39 & N/S & \fast{0.53} \\
& \makecell[l]{Qwalk (noancilla)}
& 10 & 37,017 & \timeout & \timeout & 68.89 & \timeout & 29.63 & N/S & \fast{0.76} \\
\cline{2-11}

& \makecell[l]{Grover (noancilla)}
& 14 & 235,379 & \CRASH & \timeout & \CRASH & \timeout & \timeout & 730.61 & \fast{31.19} \\
& \makecell[l]{Grover (noancilla)}
& 15 & 385,715 & \CRASH & \timeout & \CRASH & \timeout & \timeout & 1,259.86 & \fast{240.49} \\
& \makecell[l]{Grover (noancilla)}
& 16 & 630,354 & \CRASH & \timeout & \CRASH & \timeout & \timeout & \timeout & \timeout \\
\cline{2-11}

& DJ
& 512 & 2,001 & 0.06 & 0.21 & \fast{0.03} & 0.23 & 0.89 & 0.23 & 0.45 \\
& DJ
& 1024 & 4,067 & \fast{0.06} & 0.79 & \fast{0.06} & 0.86 & 1.07 & 2.41 & 1.10 \\
& DJ
& 2048 & 8,135 & 0.15 & 2.57 & \fast{0.17} & 2.73 & 1.48 & 10.77 & 5.04 \\
\cline{2-11}

& GHZ
& 512 & 512 & 0.02 & 0.18 & \fast{0.01} & \fast{0.01} & 0.76 & 0.06 & 0.39 \\
& GHZ
& 1024 & 1,024 & 0.04 & 0.79 & \fast{0.01} & 0.03 & 0.76 & 1.38 & 0.63 \\
& GHZ
& 2048 & 2,048 & 0.11 & 3.63 & \fast{0.02} & 0.07 & 0.79 & 3.25 & 1.97 \\
\cline{2-11}

& graphstate
& 512 & 1,024 & 0.02 & 0.05 & \fast{0.01} & 0.03 & 0.99 & 21.27 & 0.45 \\
& graphstate
& 1024 & 2,048 & 0.03 & 0.13 & \fast{0.02} & 0.09 & 1.30 & 560.27 & 0.85 \\
& graphstate
& 2048 & 4,096 & 0.07 & 0.12 & \fast{0.05} & 0.11 & 2.10 & 238.41 & 3.35 \\
\hline

\multirow{6}{*}{Feynman}
& ham15-low
& 17 & 213 & 14.97 & \timeout & 0.03 & 0.25 & 0.75 & \fast{0.01} & 0.22 \\

& ham15-med
& 17 & 452 & \timeout & \timeout & 0.08 & 2.20 & 0.77 & \fast{0.01} & 0.21 \\

& hwb10
& 16 & 31764 & \timeout & \timeout & \timeout & \timeout & 1.89 & 27.03 & \fast{0.31} \\

& hwb12
& 20 & 171482 & \CRASH & \timeout & \timeout & \timeout & 7.89 & 828.24 & \fast{0.88} \\

& gf2\textasciicircum 64\_mult
& 192 & 12731 & \timeout & \timeout & 20.36 & 63.73 & 1.40 & 5.36 & \fast{0.34} \\

& gf2\textasciicircum 256\_mult
& 768 & 198395 & \timeout & \timeout & \timeout & \timeout & 19.48 & \timeout & \fast{7.82} \\
\hline

\multirow{5}{=}{RevLib-H}
& add64\_184 & 193 & 385
& 0.09 & 9.86 & \fast{0.01} & \fast{0.01} & 0.78 & 0.02 & 0.27 \\

& e64-bdd\_295 & 195 & 516
& 0.10 & 1104.91 & \fast{0.01} & \fast{0.01} & 0.98 & 0.99 & 0.59 \\

& ex5p\_296 & 206 & 736
& \timeout & \timeout & \fast{0.01} & \fast{0.01} & 1.13 & 4.84 & 0.78 \\

& hwb9\_304 & 170 & 774
& \timeout & \timeout & \fast{0.01} & \fast{0.01} & 2.23 & 5.59 & 2.03 \\

& {nestedif2\_32\_445}
& 263 & 728 & N/S & N/S & N/S & N/S & N/S & \fast{146.05} & \timeout \\
\hline

\multirow{4}{*}{RevLib}
& urf2\_152
& 8 & 5030 & \timeout & \timeout & \fast{0.04} & 0.11 & 0.78 & 0.05 & 0.23 \\

& hwb9\_304
& 170 & 699 & \timeout & \timeout & \fast{0.01} & \fast{0.01} & 0.76 & \fast{0.01} & 0.25 \\

& add64\_184
& 193 & 256 & 0.35 & 27.54 & \fast{0.00} & \fast{0.00} & 0.76 & 0.01 & 0.25 \\

& ex5p\_296
& 206 & 647 & \timeout & \timeout & \fast{0.01} & \fast{0.01} & 0.77 & \fast{0.01} & 0.26 \\
\hline\hline

\end{tabular}
}
\end{table*}

\begin{table*}[!b]
\renewcommand{\arraystretch}{1.25}
\setlength{\tabcolsep}{2.4pt}
\caption{Equivalence checking runtimes (sec), where $n$ ($|G|$) denotes the number of qubits (gates).
{\timeout}, {\ERROR}, N/S, and {\CRASH} denote \emph{timeout}, \emph{wrong answer}, an unsupported benchmark, and a crash of \gpmc without producing an answer, respectively.}
\label{result_eq}
\centering

\begin{minipage}{0.495\textwidth}
\scalebox{0.6}{
\begin{tabular}{ c | r r | r r | r r | r | r }
  \hline\hline
  \bf Circuit & $\mathbf{n}$ & $\mathbf{|G|}$ 
  & \multicolumn{2}{c|}{\makecell[c]{\bf \quokka\\ \bf (\toolFormat{linear})}}
  & \multicolumn{2}{c|}{\makecell[c]{\bf \quokka\\ \bf (\toolFormat{cyclic})}}
  & \makecell[c]{\bf \qcec}
  & \makecell[c]{\bf \sliqec} \\
  \cline{4-7}
  & & & \textbf{\gpmc} & \textbf{\ganak} & \textbf{\gpmc} & \textbf{\ganak} & & \\
  \hline

\multirow{3}{*}{QAOA}
& 7 & 133
& 0.27 & 0.82 & \fast{0.25} & 0.70 & 0.08 & N/S \\
& 9 & 171
& {0.29} & \fast{0.74} & 223.83 & 14.72 & \ERROR & N/S \\
& 11 & 209
& {0.35} & 0.75 & \timeout & 525.04 & \fast{0.16} & N/S \\
\hline

\multirow{3}{*}{QFT}
& 2 & 14
& 0.25 & 0.25 & 0.25 & 0.35 & \fast{0.05} & N/S \\
& 8 & 176
& \fast{0.28} & \fast{0.28} & 277.23 & 336.85 & \ERROR & N/S \\
& 16 & 672
& \fast{0.42} & \fast{0.84} & \timeout & \timeout & \timeout & N/S \\
\hline

\multirow{3}{*}{QNN}
& 2 & 43
& \fast{0.26} & 0.55 & \fast{0.26} & 0.40 & \ERROR & N/S \\
& 8 & 319
& \CRASH & 533.58 & \timeout & \timeout & \fast{2.49} & N/S \\
& 16 & 1,023
& \timeout & \timeout & \timeout & \timeout & \timeout & N/S \\
\hline

\multirow{3}{*}{VQE}
& 5 & 83
& 0.26 & 0.61 & \fast{0.25} & 0.65 & \ERROR & N/S \\
& 10 & 168
& \fast{0.29} & 2.09 & 8.52 & 4.26 & \ERROR & N/S \\
& 15 & 253
& \fast{0.32} & 0.98 & 31.09 & 21.08 & 3.74 & N/S \\
\hline

\multirow{3}{*}{W-state}
& 16 & 271
& 0.33 & 0.61 & 0.31 & 1.01 & \fast{0.25} & N/S \\
& 32 & 559
& 0.54 & 2.62 & \fast{0.36} & 1.86 & \ERROR & N/S \\
& 64 & 1,135
& 1.47 & 8.81 & \fast{0.72} & 3.82 & \timeout & N/S \\
\hline\hline
\end{tabular}
}
\end{minipage}\begin{minipage}{0.495\textwidth}
\scalebox{0.6}{
\begin{tabular}{ c | r r | r r | r r | r | r }
  \hline\hline
  \bf Circuit & $\mathbf{n}$ & $\mathbf{|G|}$ 
  & \multicolumn{2}{c|}{\makecell[c]{\bf \quokka\\ \bf (\toolFormat{linear})}}
  & \multicolumn{2}{c|}{\makecell[c]{\bf \quokka\\ \bf (\toolFormat{cyclic})}}
  & \makecell[c]{\bf \qcec}
  & \makecell[c]{\bf \sliqec} \\
  \cline{4-7}
  & & & \textbf{\gpmc} & \textbf{\ganak} & \textbf{\gpmc} & \textbf{\ganak} & & \\
  \hline

\multirow{3}{*}{Su2random}
& 17 & 816
& \fast{4.17} & 6.00 & \timeout & \timeout & \timeout & N/S \\
& 18 & 891
& \timeout & \timeout & \timeout & \timeout & \timeout & N/S \\
& 19 & 969
& \fast{0.52} & 1.63 & \timeout & \timeout & \timeout & N/S \\
\hline

\multirow{3}{*}{\makecell[c]{Grover's \\ (noancilla)}}
& 5 & 499
& {5.97} & 62.17 & \fast{7.12} & 23.44 & \ERROR & N/S \\
& 6 & 1,568
& \timeout & \timeout & \timeout & \timeout & \fast{0.87} & N/S \\
& 7 & 3,751
& \timeout & \timeout & \timeout & \timeout & \fast{189.79} & N/S \\
\hline

\multirow{3}{*}{DJ}
& 128 & 1,007
& 3.70 & 4.77 & \fast{0.45} & 9.86 & \fast{0.45} & 9.92 \\
& 256 & 1,015
& 8.08 & 8.38 & \fast{0.59} & \timeout & 0.75 & 42.84 \\
& 512 & 2,001
& 37.70 & 31.93 & \fast{1.34} & \timeout & 2.24 & 240.26 \\
\hline

\multirow{3}{*}{GHZ}
& 128 & 130
& 1.61 & 1.79 & \fast{0.27} & 8.65 & 0.49 & 6.85 \\
& 256 & 256
& 4.38 & 4.34 & \fast{0.33} & 22.98 & 0.84 & 90.67 \\
& 512 & 512
& 26.40 & 25.33 & \fast{0.46} & \timeout & 1.72 & 372.79 \\
\hline

\multirow{3}{*}{graphstate}
& 128 & 1,280
& 8.70 & 6.91 & 4.14 & \fast{0.06} & \timeout & 13.22 \\
& 256 & 512
& 6.33 & 7.82 & \fast{0.38} & 13.96 & 0.94 & 392.07 \\
& 512 & 1,024
& 34.28 & \fast{0.51} & 38.94 & 32.44 & 2.36 & 1231.49 \\
\hline\hline
\end{tabular}
}
\end{minipage}

\end{table*}
 \myparagraphMajor{Equivalence checking.}
We perform exact equivalence checking on representative MQTBench families,
focusing on non-equivalent circuit pairs.
Non-equivalent instances are generated by first optimizing circuits with
PyZX~\cite{PyZXLS} and then injecting errors into the optimized circuits.
Specifically, we alter rotation angles to simulate \emph{phase-shift errors}, by injecting a small perturbation (adding $10^{-4}$) into a randomly chosen rotation gate. For cases without rotation gates,
we remove one gate randomly from the optimized circuit.
All our benchmarks consist of pairs of nonequivalent circuits, which favors \qcec because it performs many fast ``bug hunting'' approaches in parallel.

For \quokka, we use the cyclic encoding in the computational basis, as the linear encoding is not available there, and the linear encoding in the Pauli basis, which we found to outperform the cyclic encoding in that basis.
We compare against \sliqec~\cite{chen2022partial}, which uses decision diagrams, and \qcec v3.3.0~\cite{advanced2021burgholzer} from the MQT set, which combines decision diagrams and ZX calculus. 
Like \sliqsim, \sliqec offers exact results without floating-point error.

\autoref{result_eq} shows the results. \quokka outperforms \qcec on circuits with rotation gates (e.g., W-state, QFT). 
In exact equivalence checking, the proposed cyclic encoding serves as a complement to the linear encoding and outperforms it in certain cases (e.g., DJ and GHZ circuits). In contrast, approximate equivalence checking can only be performed using the cyclic encoding.
Notably, \quokka is significantly more robust than \qcec for phase-shift errors.
For all given benchmarks, the expected equivalence-checking result is known by construction (all circuit pairs are non-equivalent).
However, \qcec sometimes reports circuits as equivalent due to numerical instability. We mark such results as \ERROR.
Compared to \sliqec, \quokka shows better performance.

\myparagraphMajor{Verification.}
We consider the {quantum walk} circuit (5 qubits, 1305 gates) from
MQTBench and verify that an ancilla qubit is returned to its $\ket{0}$ state.
Both the pre- and post-condition are specified as $[\texttt{4:0}]$, referring
to the fifth qubit.

In both the computational and Pauli bases, the verification result is
\texttt{True}, with runtimes of 0.66 and 3.30\,seconds, respectively, confirming
that the ancilla qubit remains unchanged.
A direct comparison with \autoQ~\cite{chen2023autoq,chen2024autoq,chen2025verify}
is not applicable, as \autoQ models pre- and post-conditions as discrete sets
of basis states.
In contrast, our approach reasons over subspaces of quantum states, enabling
the specification of continuous families of superposition states; the two
predicate models are incomparable in expressive power.

\myparagraphMajor{Synthesis.}
\autoref{tab:syn} reports a case study on exact synthesis, decomposing the
Toffoli gate
$
\mathrm{CCX}=
\begin{smallmat}
I_{6\times 6} & 0_{6\times 2} \\
0_{2\times 6} & X
\end{smallmat}
$
into the Clifford+$C\sqrt{X}$ gate set $\{H,CX,C\sqrt{X}\}$.
Following the encoding in \autoref{sec:syn}, synthesis converges after six
iterations, yielding the circuit
$C\sqrt{X}_{0,2}C\sqrt{X}_{1,2}CX_{0,1}C\sqrt{X^\dagger}_{0,2}CX_{0,1}$,
where $CA_{i,j}$ denotes a controlled-$A$ gate with control qubit $i$ and target
qubit $j$.

Aside from the above gate set and Clifford+$T$~\cite{amy2013mitm}, our framework also supports user-defined gate sets at the cost of higher runtime due to its general formulation (see \autoref{sec:syn}). 

\begin{table}[h!] \renewcommand{\arraystretch}{1.25}
\setlength{\tabcolsep}{8pt}
\centering
\caption{Toffoli synthesis using \quokka with \texttt{d4Max}.}
\label{tab:syn}
\scriptsize
\resizebox{0.6\textwidth}{!}{
\begin{tabular}{lrrrrrr}
\hline\hline
\bf Iteration & 1 & 2 & 3 & 4 & 5 & 6 \\
\hline
\bf Variables & 24 & 60 & 96 & 132 & 168 & 204 \\
\bf Clauses & 136 & 992 & 1886 & 2780 & 3674 & 4568 \\
\bf Literals & 524 & 3760 & 7126 & 10492 & 13858 & 17224 \\
\bf Runtime & 0.03 & 0.04 & 0.16 & 3.39 & 141.86 & 577.78 \\
\hline
\end{tabular}
}
\vspace{-2em}
\end{table}
 
\section{Discussion}
\label{sec:discussion}

The aim of \quokka is to harness advances in satisfiability and model counting for quantum circuit analysis. This approach is informed by
how SAT solvers revolutionized e.g. model checking~\cite{biere2009bounded,bradley2011sat,een2011efficient,CTIGAR,donaldson2011software}.
They achieved this by avoiding the explicit construction of (strongest) fixpoints, as done in earlier approaches based on decision diagrams~\cite{bryant86,mcmillan,chaki2018bdd}.
In a similar vain, the model counting approach to quantum circuit analysis and optimization yields important distinctions from other tools, especially those based on decision diagrams (\ddsim, \quasimodo, \sliqsim, \qcec, and \sliqec), as discussed next.

\myparagraphMajor{Simulation Algorithm.}
In the first place, \quokka's algorithms can be understood as computing a path sum~\cite{mei2024disentangling} or ``Feynman simulation''~\cite{bernstein1993quantum,feynman2018simulating}. The other tools use ``Schr\"odinger simulation''~\cite{aaronson2017complexity} by explicitly computing the final quantum state, and all intermediate states, in some representation (decision diagrams).
These distinctions can be blurred, however.
Recent approaches have begun to combine the path sum with decision diagrams~\cite{wang2025feynmandd} and to integrate it more explicitly into the model counting method~\cite{huang2026equivalence}. There are also approaches that compute path sums using rewrite systems~\cite{amy2019towards} and knowledge compilation~\cite{huang2021logical}.

\myparagraphMajor{Representations.}
Symbolic formalisms may be able to succinctly represent the exponentially many amplitudes of the states encountered in the Schr\"odinger simulation.
Various forms of decision diagrams can be used for this purpose \cite{viamontes2003improving,sistla2023weighted,miller2006qmdd,vinkhuijzen2023limdd}, but also tensor networks or restricted Boltzmann machines, with different succinctness and tractability tradeoffs~\cite{vinkhuijzen2026knowledge}.

While model counters do not compute an explicit representation of individual quantum states, their trace can be understood as decision-DNNF~\cite{huang2007language,beame2013lower}, a normal form that subsumes decision diagrams. 
This trace implicitly represents the symbolic evolution described by the path sum, including the intermediate and final states.
If the trace is exported (a process that is typically called ``knowledge compilation''), the amplitudes and properties of these states can be queried on the decision-DNNF representation.
Decision-DNNF is exponentially more succinct~\cite{darwiche2002knowledge} than decision diagrams, but (naturally) retains the tractability of model counting on it, at the cost of sacrificing tractability for certain other operations (for instance, conjunction becomes hard for decision-DNNF).

\myparagraphMajor{Rounding.}
Computing with (negative) real numbers (the amplitudes of quantum states) inevitably introduces rounding errors in floating-point representations, unless one reverts to algebraic methods.
\sliqsim and \sliqec opt for an algebraic representation of the extended ring $\mathbb Z[i, \frac 1{\sqrt2}]$~\cite{giles2013exact}.
Such algebraic representations are known to have favorable asymptotic properties~\cite{quist2026exact} and both tools also show admirable performance, at the cost of being limited to Clifford+$T$ circuits (hence the omission of various benchmarks in \autoref{sec:experiments}).

The other tools studied here use floating-point representations.
For weighted model counting, rounding errors can pose significant challenges, but can also be reliably restrained for non-negative weights~\cite{bryant2025numerical}. Moreover, interval-based methods seem to provide sufficient guarantees in practice for negative weights~\cite{bryant2025numerical}.

In decision diagrams, weights have to be hashed to obtain a canonical form efficiently, causing another source of rounding errors~\cite{brand2025numerical}. This has proven a major obstacle in the past~\cite{zulehner2019efficiently,brand2025q,sanner2005affine}. For Quasimodo's CFLOBDD, this effect appears more limited in practice, likely due to the structure's reduced depth.

\myparagraphMajor{Task variation.}
\quokka targets exact and optimal quantum circuit analysis and optimization.
For instance, its synthesis method is depth-optimal, and the optional approximation provides provable guarantees~\cite{zak2025syntesis}. This approach is useful for specific tasks such as constructing new universal gate sets or improving magic gadgets to achieve better asymptotic results~\cite{bravyi2019simulation}, but it is limited in scale.
Heuristic approaches, for instance those based on ZX calculus~\cite{coecke2011interacting,villoria2026optimisation,van2020zx,kissinger2022simulating,PyZXLS}, enable the optimization of much larger circuits.

This paper considered strong simulation with computational- and
Pauli-basis measurements, which \quokka naturally supports in the Pauli-basis encoding.
The method, however, does not easily extend to weak simulation 
(see \autoref{table:encoding-comparison}), since existing CNF sampling algorithms~\cite{soos2020tinted,meel2022counting} break down in the presence of negative weights~\cite{mei2024disentangling}.
Recent extensions of \sliqsim support a broader range of measurement queries~\cite{sliqsimnew2025Chen}.
We believe that \quokka can support such queries by encoding the corresponding constraints directly in CNF.
Further, partial equivalence checking considers circuits with different qubit counts~\cite{chen2022partial}.
We believe that \quokka can also be extended to support this.

Different projectors can affect the performance of all methods.
This is best understood in the case of decision diagrams:
When the simulator succeeds in constructing the diagram $D$ representing the final state, computing the probability of obtaining outcome 0 (i.e., a measurement in the computational basis) on qubits $q_1,\dots,q_k$, i.e. a subsystem of the quantum states, reduces to (1) restriction $D[q_1:= 0,\dots,q_k:= 0]$ and (2) model counting (both take polynomial time in the size of the diagram~\cite{darwiche2002knowledge}).
However, if instead we ask for the probability of obtaining the outcome corresponding to $\ket{+}^{\otimes k}$ in a Hadamard-basis measurement of the same subsystem, the situation becomes markedly different.
One essentially has to perform second-level model counting (2AMC)~\cite{wang2024compositional} on the diagram.
Based on known results~\cite{wang2024compositional}, it seems that this is not tractable even for decision diagrams (OBDDs)
(unless the variables for qubits $q_1,\dots,q_k$ appear at the bottom).
This would show, as expected, that no method can avoid the inherent counting-like complexity in quantum computing~\cite{fortnow1999complexity}.
\quokka's approach, therefore, is to hand the entire problem to a \#SAT solver, which implicitly computes all steps at once (circuit evaluation and measurement/equivalence check/verification).

\section{Conclusion}
\label{sec:conclusion}

\quokka harnesses the strength of \#SAT solvers for quantum circuit simulation, equivalence checking, and depth-optimal synthesis into a single coherent tool.
In addition, this first tool release adds Hoare-logic verification, improved encodings (cyclic and computational basis), and approximate equivalence checking and synthesis.
Future extensions include automated encoding selection and incorporating fully symbolic algebraic model counting~\cite{kimmig2017algebraic,campos2020weightedmodelcounting}.

\quokka's impact on symbolic reasoning is two-fold.
First, \quokka indicates and stimulates improvements in (maximum) weighted
model counters.
So far, \quokka has motivated extensions to d4Max~\cite{audemard2022maxsat} and Ganak~\cite{srsm19}, adding support for negative, complex, and algebraic weights.
It also contributed new benchmarks to the model counting competition~\cite{Competition2021_23}.
Second, \quokka enables easy discovery of new applications of \#SAT to quantum computing, such as ancilla qubit uncomputation~\cite{paradis2021unqomp}.
Looking ahead, we expect \quokka to become a central tool for \#SAT-based analysis and optimization in quantum computing.

\myparagraphMajor{Acknowledgements.}
We thank the anonymous reviewers for their constructive feedback.
This work was supported by the Dutch National Growth Fund, as part of the Quantum Delta NL program.
The fourth author acknowledges the support received through the NWO Veni program (VI.Veni.232.381). 

\myparagraphMajor{Data Availability Statement.}
The implementation, benchmarks, scripts, and reproduction instructions are included
in the accompanying Zenodo artifact~\cite{quokka-artifact}. The Quokka\# source code
is also available on GitHub~\cite{quokka_sharp_github}.

{\fontsize{9}{11}\selectfont
\myparagraphMajor{Disclosure of Interests.}
The authors have no competing interests to declare.}

\bibliography{lit}

\end{document}